\newtheorem{remark}{Remark}
\newtheorem{theorem}{Theorem}
\newtheorem{lemma}{Lemma}
\newenvironment{proof}{{\noindent\it Proof:}}{\hfill $\blacksquare$\par}
\newcommand{\dd}{\mathrm{d}}
\begin{document}
\title{Exact SINR Analysis of Matched-filter Precoder} 

 \author{%
  \IEEEauthorblockN{Hui Zhao, Dirk Slock, and Petros Elia}
  \IEEEauthorblockA{Communication Systems Department,
                    EURECOM,
                    Sophia Antipolis, France\\
                    Email: hui.zhao@eurecom.fr; dirk.slock@eurecom.fr; petros.elia@eurecom.fr}
 }



\maketitle


\begin{abstract}
   This paper answers a fundamental question about the exact distribution of the signal-to-interference-plus-noise ratio (SINR) under matched-filter (MF) precoding. Specifically, we derive the exact expressions for the cumulative distribution function (CDF) and the probability density function (PDF) of SINR under MF precoding over Rayleigh fading channels. Based on the exact analysis, we then \emph{rigorously} prove that the SINR converges to some specific distributions separately in high SNR and in massive MIMO. To simplify the exact result in general cases, we develop a good approximation by modelling the interference as a Beta distribution. We then shift to the exact analysis of the transmit rate, and answer the fundamental question: How does the exact rate converge to the well-known asymptotic rate in massive MIMO? After that, we propose a novel approximation for the ergodic rate, which performs better than various existing approximations. Finally, we present some numerical results to demonstrate the accuracy of the derived analytical models.
\end{abstract}

\section{Introduction}
In multiuser multiple-input multiple-output (MU-MIMO) systems, a base station (BS) equipped with multiple antennas simultaneously serves several users, each requesting different/same messages, in the same time-frequency resource. Some processing techniques on the original data signal need to be implemented at the BS (i.e., precoding)  to manage the inter-user interference at each receiver in such spatial multiplexing systems. While very considerable research has focused on a variety of advanced precoding schemes, the workhorses are linear precoders such as matched filtering (MF) \cite{Larsson_Mag,Wagner,Rusek}. These linear precoders maintain low complexity while achieving spectral efficiencies often close to the non-linear Dirty-Paper Coding \cite{Caire_Dirty}, especially in large-scale antenna arrays \cite{Hien}. However, there are some fundamental questions about linear precoding, especially about MF, that remain unanswered.

It is the case though that the exact distribution of the signal-to-interference-plus-noise ratio (SINR) under MF precoding has been not revealed due to $i)$ the correlation between the interference and the useful signal and $ii)$ the correlation among the interference terms. This also makes the exact analysis of the ergodic rate intractable in the fast-fading scenario.  For that, various approximations of the ergodic rate under MF precoding have been developed in some special cases, especially in the massive MIMO regime. For example, the authors of \cite{Hien,Zhao_SPAWC} applied Jensen's Inequality to derive a lower bound of the ergodic rate, while the so-called ``near deterministic" method is proposed in \cite{Yeon_Geun_Lim,Mi_TCOM} to approximate the ergodic rate, especially in high and low signal-to-noise ratio (SNR) regimes. The authors of \cite{Zhang,Zhao_VectorCC,Zhao_LMS} developed a good approximation method for the ergodic rate though with the requirement of large numbers of both transmit antennas and served users. Again in the massive MIMO regime, some works (e.g., \cite{Hoydis2013,Sadeghi,Emil_unbounded}) treated the interference as noise to alternatively analyze an achievable rate so that the asymptotic deterministic equivalence of this achievable rate can be derived. Although the aforementioned various approximations have provided us with good analytical models for performance evaluation of MF, none of them tells us about the answer to the fundamental question: To what extent does the exact rate converge to the approximations, e.g., the convergence type? It is also the case that few works analyze the SINR randomness to investigate the outage performance of MF precoding in slow-fading \cite{Feng}. We note that the authors of \cite{Feng} also assumed a large-scale antenna array to ignore the correlation between the useful signal and the interference to simplify the SINR analysis. 

In this paper, we will present an exact analysis of the SINR distribution under MF precoding \emph{for the first time}, as well as various simplified results in some special cases, including high SNR and massive MIMO. More importantly, we will \emph{rigorously} prove that the exact SINR converges to these simplified distributions associated with the corresponding special cases. We will also develop a good approximation to simplify the SINR analysis in general cases. Based on the exact SINR analysis,  we will \emph{rigorously} prove that the exact transmit rate converges to the well-known asymptotic rate in massive MIMO (cf. \cite{Rusek})  \emph{almost surely}. Finally, a novel and simple approximation for the ergodic rate will be presented, which is shown to be more accurate than various existing approximations.\footnote{\emph{Notations:} $||\cdot||$ denotes the norm-2 of a vector, while $|\cdot|$ denotes the magnitude of a complex number. For a matrix $\bf A$, we use ${\bf A}^H$, ${\bf A}^*$ and ${\bf A}^T$ to denote its conjugate transpose, conjugate part and non-conjugate transpose respectively. $\text{Exp}(\cdot)$, $\text{Gamma}(\cdot,\cdot)$, $\text{Inv-Gamma}(\cdot,\cdot)$, $\text{Beta}(\cdot,\cdot)$ and $\mathcal{CN}(\cdot,\cdot)$ denote the exponential distribution, the Gamma distribution, the inverse Gamma distribution, the Beta distribution and the complex Gaussian distribution respectively. $\stackrel{d.}{\longrightarrow}$, $\stackrel{p.}{\longrightarrow}$ and $\stackrel{a.s.}{\longrightarrow}$ denote the convergence in distribution, the convergence in probability and the almost sure convergence respectively.}

\section{System Model and Exact SINR Analysis}
In a downlink MU-MIMO system, a BS equipped with $L$ antennas adopts MF precoding to serve $K$ single-antenna users at a time. For the channel matrix ${\bf H}  \in \mathbb{C}^{L \times K}$, the MF precoder is of the form
\begin{align}
    {\bf W} = \bigg[ \frac{{\bf h}_1^*}{|| {\bf h}_1^* ||},  \frac{{\bf h}_2^*}{|| {\bf h}_2^* ||}, \cdots, \frac{{\bf h}_K^*}{|| {\bf h}_K^* ||}   \bigg] \in \mathbb{C}^{L \times K},
\end{align}
where ${\bf h}_k \in  \mathbb{C}^{L \times 1}$, the $k$-th row of ${\bf H}$, denotes the channel vector from the BS to the $k$-th user. Then, under the transmit power constraint $P_t$, the signal transmitted at the BS is ${\bf s} = \sqrt{P_t/K}{\bf W} {\bf x}$, where the $k$-th element $x_k$ of ${\bf x} \in \mathbb{C}^{K \times 1}$ is the data signal intended by the $k$-th user. Under the usual Gaussian signalling assumption, the SINR at the $k$-th user for decoding $x_k$ takes the form
\begin{align}\label{SINR_def}
    \text{SINR}_k = \frac{ \frac{P_t}{K} ||{\bf h}_k||^2 }{ \sigma_k^2 + \frac{P_t}{K} \sum_{{i=1, i \neq k}}^K  \frac{ | {\bf h}_k^T {\bf h}_i^* |^2 }{|| {\bf h}_i^* ||^2} },
\end{align}
where $\sigma_k^2$ denotes the power of the additive Gaussian white noise (AWGN) at the $k$-th user. The transmission rate for the $k$-th user then takes the form
\begin{align}
    R_k = \ln \big( 1 + \text{SINR}_k \big) \ \text{nats/s/Hz}.
\end{align}
The ergodic rate $\bar R_k$ is defined as the mean of $R_k$ averaged over channel states.

In the remainder of the paper, we consider Rayleigh fading channels for performance analysis \cite{three_GPP}, where the elements in ${\bf H}$ are independently complex Gaussian distributed with zero-mean and unit-variance.
To facilitate the analysis, we rewrite $\text{SINR}_k$ in \eqref{SINR_def} as
\begin{align}\label{SINR_divide_eq}
    \text{SINR}_k 
    = \Big(  Y +  \sum\nolimits_{i=1, i \neq k}^K  X_i \Big)^{-1},
\end{align}
where $Y \triangleq \frac{K \sigma_k^2}{P_t||{\bf h}_k||^2}$,  $X_i \triangleq \frac{ 1}{|| {\bf h}_i^* ||^2}   \big| {\bf u}_k^T {\bf h}_i^*  \big|^2$, and ${\bf u}_k \triangleq \frac{ {\bf h}_k  }{||{\bf h}_k||}$ with unit-norm. It is easy to prove that $||{\bf h}_k||^2$ follows a Gamma distribution with the shape parameter $L$ and the scale parameter $1$, denoted by $||{\bf h}_k||^2 \sim \text{Gamma}(L,1)$. So we have that $Y \sim \text{Inv-Gamma}(L,\frac{K \sigma_k^2}{P_t})$.
According to Cauchy–Schwarz Inequality, we know that $X_i  \le 1$.
In the following, we analyze the distribution of $X_i$ in Lemma~\ref{Xi_Distri_Lem}, where the cumulative distribution function (CDF) and the probability density function (PDF) are derived.

\begin{lemma}\label{Xi_Distri_Lem}
    The random variable $X_i\triangleq \frac{ 1}{|| {\bf h}_i^* ||^2}   \big| {\bf u}_k^T {\bf h}_i^*  \big|^2 \in [0,1]$ has a Beta distribution with the first shape parameter $1$ and the second shape parameter $L-1$, denoted by $X_i \sim \text{Beta}(1,L-1)$. The CDF and PDF of $X_i$ are respectively
    \begin{align}\label{CDF_Xi_eq}
    &F_{X_i}(x)  = 1 - (1-x)^{L-1},  \\
    &f_{X_i}(x) = (L-1) (1-x)^{L-2}.
    \end{align}
Furthermore, $X_i$ is independent of ${\bf u}_k$ (or equivalently, ${\bf h}_k$), and $\{X_i\}_{i=1,i\neq k}^K$ are independent of each other.
\end{lemma}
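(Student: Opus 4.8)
The plan is to reduce the quadratic form to a canonical coordinate form by exploiting the unitary invariance of the isotropic complex Gaussian, and then to recognize the resulting ratio as a Gamma/Gamma quotient. First I would rewrite the bilinear form via the substitutions $\mathbf{g}_i \triangleq \mathbf{h}_i^*$ and $\mathbf{v} \triangleq \mathbf{u}_k^*$. Since the conjugate of a $\mathcal{CN}(0,1)$ entry is again $\mathcal{CN}(0,1)$ and the entries are i.i.d., we have $\mathbf{g}_i \sim \mathcal{CN}(\mathbf{0},\mathbf{I}_L)$, while $\mathbf{v}$ is a deterministic unit vector once we condition on $\mathbf{h}_k$. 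Using $\|\mathbf{h}_i^*\|^2 = \|\mathbf{g}_i\|^2$ and $\mathbf{u}_k^T \mathbf{h}_i^* = \mathbf{v}^H \mathbf{g}_i$, this turns $X_i$ into the standard projection ratio $X_i = |\mathbf{v}^H \mathbf{g}_i|^2 / \|\mathbf{g}_i\|^2$.

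The central step is to condition on $\mathbf{u}_k$ (equivalently $\mathbf{h}_k$, hence $\mathbf{v}$). For $i \neq k$ the vector $\mathbf{g}_i$ is independent of $\mathbf{h}_k$, so conditionally it remains $\mathcal{CN}(\mathbf{0},\mathbf{I}_L)$. I would then choose a unitary matrix $\mathbf{U}$ whose first row is $\mathbf{v}^H$, i.e.\ $\mathbf{U}\mathbf{v} = \mathbf{e}_1$, and set $\tilde{\mathbf{g}} \triangleq \mathbf{U}\mathbf{g}_i$. Because the standard complex Gaussian is invariant under unitary rotations, $\tilde{\mathbf{g}} \sim \mathcal{CN}(\mathbf{0},\mathbf{I}_L)$ as well, while $\mathbf{v}^H \mathbf{g}_i = \tilde{g}_1$ and $\|\mathbf{g}_i\|^2 = \|\tilde{\mathbf{g}}\|^2$. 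Thus, conditioned on $\mathbf{u}_k$, we get $X_i = |\tilde{g}_1|^2 / \sum_{l=1}^L |\tilde{g}_l|^2$, a law that no longer involves $\mathbf{v}$. The disappearance of $\mathbf{u}_k$ from the conditional law is exactly the assertion that $X_i$ is independent of $\mathbf{u}_k$, giving the first independence claim for free.

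The distribution then follows from a classical Gamma-ratio identity. Writing $A \triangleq |\tilde{g}_1|^2 \sim \text{Gamma}(1,1)$ and $B \triangleq \sum_{l=2}^L |\tilde{g}_l|^2 \sim \text{Gamma}(L-1,1)$ with $A$ and $B$ independent, we have $X_i = A/(A+B) \sim \text{Beta}(1,L-1)$; the claimed CDF $F_{X_i}(x) = 1-(1-x)^{L-1}$ and PDF $f_{X_i}(x) = (L-1)(1-x)^{L-2}$ then drop out after evaluating the Beta function $B(1,L-1) = 1/(L-1)$ and integrating. For the mutual independence of $\{X_i\}_{i \neq k}$, I would again argue conditionally on $\mathbf{u}_k$: each $X_i$ is a function of its own $\mathbf{g}_i$, and the $\mathbf{g}_i$ are mutually independent, so the $X_i$ are conditionally independent; since their common conditional law does not depend on $\mathbf{u}_k$, the conditional joint distribution equals the product of the marginals irrespective of $\mathbf{u}_k$, which upgrades to unconditional mutual independence together with joint independence from $\mathbf{u}_k$.

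The main obstacle I anticipate is bookkeeping rather than conceptual: the statement mixes the non-conjugate transpose $\mathbf{u}_k^T \mathbf{h}_i^*$ with the Hermitian structure needed for the rotation argument, so the substitution $\mathbf{v} = \mathbf{u}_k^*$ must be handled carefully to ensure that $\mathbf{u}_k^T \mathbf{h}_i^*$ genuinely equals $\mathbf{v}^H \mathbf{g}_i$. Once this identification is clean, the unitary-invariance reduction and the subsequent Gamma-ratio computation are routine.
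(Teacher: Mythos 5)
Your proposal is correct and follows essentially the same route as the paper: both reduce $X_i$ via unitary invariance of the isotropic complex Gaussian to the canonical ratio $|\tilde g_1|^2/\sum_{\ell}|\tilde g_\ell|^2$ and identify it as a $\text{Gamma}(1,1)/\text{Gamma}(L,1)$-type quotient, i.e.\ $\text{Beta}(1,L-1)$ (the paper constructs the rotation through an eigendecomposition of ${\bf I}_L - \frac{1}{x}{\bf u}_k^*{\bf u}_k^T$, which is the same idea). Your conditional-independence argument for $\{X_i\}_{i\neq k}$ and for independence from ${\bf u}_k$ is, if anything, stated slightly more cleanly than the paper's pairwise-uncorrelatedness argument.
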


\begin{proof}
    The proof is relegated to Appendix~\ref{Proof_Xi_Distri_Lem}.
\end{proof}

Let $\text{Im}\{\cdot\}$ denote the imaginary part of a complex number, and let $\jmath \triangleq \sqrt{-1}$ represent the imaginary unit. We use $\mathcal{K}_\cdot(\cdot)$, $\Gamma(\cdot)$ and $\Upsilon(\cdot,\cdot)$ to respectively denote the modified Bessel function of the 2nd kind, the Gamma function and the lower incomplete Gamma function \cite{Gradshteyn}. Now, we present the distribution of $\text{SINR}_k$ in Theorem~\ref{SINR_MF_dist_thm}, which is derived based on the characteristic function (CF) of $Y + \sum_{i=1,i\neq k}^K X_i$.
\begin{theorem}\label{SINR_MF_dist_thm}
    The CDF and PDF of $\text{SINR}_k$ under MF precoding are given by \eqref{CDF_SINRk_eq} and \eqref{PDF_SINRk_eq} respectively, shown at the top of the next page. 
    \begin{figure*}
    \begin{align}
        &F_{\text{SINR}_k}(\gamma) = \frac{1}{2} + \frac{2(L-1)^{K-1}}{\pi \Gamma(L)} \Big(\frac{K\sigma_k^2 }{P_t} \Big)^{\!\! \frac{L}{2}} \! \int_0^\infty \text{Im}\bigg\{   \frac{ \jmath^L \exp\big(\jmath t (K-1 - \frac{1}{\gamma})  \big)} {t (\jmath t)^{(L-1)(K-1) - L/2} }  \mathcal{K}_L\!\bigg(  \sqrt{-4\jmath  K\sigma_k^2 t/P_t}  \bigg)   \big[ \Upsilon(L-1, \jmath t) \big]^{K-1} \! \bigg\} \dd t \label{CDF_SINRk_eq}\\
         &f_{\text{SINR}_k} (\gamma) = \frac{(L-1)^{K-1}}{ \gamma^2 \pi \Gamma(L)} \Big(\frac{K\sigma_k^2 }{P_t} \Big)^{\!\! \frac{L}{2}}\int_{-\infty}^{+\infty}   \frac{ \jmath^L \exp\big(\jmath t (K-1 - \frac{1}{\gamma})  \big)} {(\jmath t)^{(L-1)(K-1)-L/2} } \mathcal{K}_L\bigg( \sqrt{-4\jmath  K\sigma_k^2 t/P_t} \bigg)   \big[ \Upsilon(L-1, \jmath t) \big]^{K-1} \dd t \label{PDF_SINRk_eq}
    \end{align}
    \rule{18cm}{0.01cm}
    \end{figure*}
\end{theorem}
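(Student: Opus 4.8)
The plan is to use the independence structure from Lemma~\ref{Xi_Distri_Lem} to assemble the characteristic function (CF) of the denominator $S \triangleq Y + \sum_{i=1,i\neq k}^K X_i$ of \eqref{SINR_divide_eq}, invert it, and map the result through the reciprocal relation $\text{SINR}_k = 1/S$. Because $Y$ is a function of $\|{\bf h}_k\|^2$ only while each $X_i$ is independent of ${\bf h}_k$ and of the other $X_j$, the CF factorizes as $\phi_S(t) = \phi_Y(t)\,[\phi_X(t)]^{K-1}$ with $\phi_X(t)=\mathbb{E}[e^{\jmath t X_i}]$ and $\phi_Y(t)=\mathbb{E}[e^{\jmath t Y}]$. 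This reduces the whole theorem to two elementary transforms followed by a single Fourier inversion.

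First I would evaluate $\phi_X(t)$ for $X_i\sim\text{Beta}(1,L-1)$. Substituting the density $f_{X_i}$ and letting $u=1-x$ gives $(L-1)e^{\jmath t}\int_0^1 e^{-\jmath t u}u^{L-2}\,\dd u$, and a further change of variable $v=\jmath t u$ identifies the integral as a lower incomplete Gamma function, so $\phi_X(t)=(L-1)e^{\jmath t}(\jmath t)^{-(L-1)}\,\Upsilon(L-1,\jmath t)$. Raising to the power $K-1$ already yields the factors $(L-1)^{K-1}$, $e^{\jmath t(K-1)}$, $(\jmath t)^{-(L-1)(K-1)}$ and $[\Upsilon(L-1,\jmath t)]^{K-1}$ appearing in \eqref{CDF_SINRk_eq}--\eqref{PDF_SINRk_eq}.

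Next I would evaluate $\phi_Y(t)$ for $Y\sim\text{Inv-Gamma}(L,\beta)$ with $\beta\triangleq K\sigma_k^2/P_t$. Inserting the inverse-Gamma density leads to $\frac{\beta^L}{\Gamma(L)}\int_0^\infty y^{-L-1}e^{-\beta/y+\jmath t y}\,\dd y$, which matches the standard representation $\int_0^\infty y^{\nu-1}e^{-a/y-by}\,\dd y = 2(a/b)^{\nu/2}\mathcal{K}_\nu(2\sqrt{ab})$ \cite{Gradshteyn} with $\nu=-L$, $a=\beta$ and $b=-\jmath t$. Using $\mathcal{K}_{-L}=\mathcal{K}_L$ and the branch identity $(-\jmath t)^{L/2}=\jmath^L(\jmath t)^{L/2}$ gives the Bessel factor $\mathcal{K}_L(\sqrt{-4\jmath K\sigma_k^2 t/P_t})$ times $\frac{2}{\Gamma(L)}\jmath^L(K\sigma_k^2/P_t)^{L/2}(\jmath t)^{L/2}$. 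Forming $\phi_S(t)=\phi_Y(t)[\phi_X(t)]^{K-1}$ then merges the $(\jmath t)^{L/2}$ from $\phi_Y$ with the $(\jmath t)^{-(L-1)(K-1)}$ from $[\phi_X]^{K-1}$ into the single exponent $-(L-1)(K-1)+L/2$, reproducing precisely the integrand of \eqref{PDF_SINRk_eq}.

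Finally I would invert and pass to the reciprocal. For the PDF I apply $f_S(s)=\frac{1}{2\pi}\int_{-\infty}^{\infty}e^{-\jmath t s}\phi_S(t)\,\dd t$ together with the reciprocal rule $f_{\text{SINR}_k}(\gamma)=\gamma^{-2}f_S(1/\gamma)$; this produces the $\gamma^{-2}$ prefactor and collapses $e^{-\jmath t/\gamma}e^{\jmath t(K-1)}$ into $e^{\jmath t(K-1-1/\gamma)}$, giving \eqref{PDF_SINRk_eq}. For the CDF I use the Gil--Pelaez inversion $F_S(s)=\frac12-\frac1\pi\int_0^\infty\text{Im}\{t^{-1}e^{-\jmath t s}\phi_S(t)\}\,\dd t$ with $F_{\text{SINR}_k}(\gamma)=1-F_S(1/\gamma)$, which supplies the extra $t^{-1}$, the $\text{Im}\{\cdot\}$, the half-line range and the constant $\tfrac12$ of \eqref{CDF_SINRk_eq}. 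I expect the main obstacle to be analytic rather than algebraic: one must fix a consistent branch for the fractional powers $(\jmath t)^{\alpha}$ and for $\sqrt{-4\jmath K\sigma_k^2 t/P_t}$, justify the integral representation of $\phi_Y$ when $b=-\jmath t$ lies on the boundary of its region of validity, and verify that the inversion integrals converge. The delicate point is the limit $t\to0$, where the negative power of $(\jmath t)$ looks singular but is cancelled by the expansions $\Upsilon(L-1,\jmath t)\sim(\jmath t)^{L-1}/(L-1)$ and $\mathcal{K}_L(z)\sim\frac12\Gamma(L)(z/2)^{-L}$, consistent with $\phi_X(0)=\phi_Y(0)=1$; the decay as $t\to\infty$ must also be checked to legitimize the inversion.
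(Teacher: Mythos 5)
Your proposal is correct and follows essentially the same route as the paper's own proof: factorizing the characteristic function of $Z=Y+\sum_{i\neq k}X_i$ via the independence established in Lemma~\ref{Xi_Distri_Lem}, using the Beta and inverse-Gamma CFs (the Bessel-$\mathcal{K}_L$ representation for $Y$ and the incomplete-Gamma form for $X_i$), and then applying Fourier inversion for the PDF and Gil--Pelaez for the CDF together with the reciprocal map $\text{SINR}_k=1/Z$. The only difference is that you derive the two CFs from first principles and flag the branch-cut and convergence issues explicitly, whereas the paper cites the standard CF formulas directly.
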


\begin{proof}
    The proof is relegated to Appendix~\ref{Proof_SINR_MF_dist_thm}.
\end{proof}

\section{Simplified Results of SINR Distribution}
In this section, we will perform some analysis to simplify the results of the SINR distribution.

\subsection{High SNR Regime}
We first consider the high SNR regime where $P_t$ goes to infinity.
Let $X' \triangleq \frac{1}{K-1} \sum_{i=1, i \neq k}^K X_i \in [0,1]$. Now, we have the following result.

\begin{lemma}\label{SINRk_dist_limit_lem}
   In high SNR,  we have the convergence result: 
   \begin{align}\label{SINRk_dist_limit}
       \frac{1}{(K-1) \text{SINR}_k} \stackrel{d.}{\longrightarrow}
        X', \text{   as } P_t \to \infty
   \end{align}
\end{lemma}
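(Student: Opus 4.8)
The plan is to reduce the statement to a one-line Slutsky-type argument by exploiting the additive structure of $(K-1)^{-1}\text{SINR}_k^{-1}$. First I would use the representation \eqref{SINR_divide_eq} to write
\begin{align}
    \frac{1}{(K-1)\text{SINR}_k} = \frac{Y}{K-1} + \frac{1}{K-1}\sum\nolimits_{i=1,i\neq k}^K X_i = \frac{Y}{K-1} + X',
\end{align}
so that the target random variable differs from the claimed limit $X'$ only by the additive residual $Y/(K-1)$. Crucially, by Lemma~\ref{Xi_Distri_Lem} the variables $\{X_i\}$ (hence $X'$) do not depend on $P_t$, so $X'$ is a fixed random variable and the entire $P_t$-dependence is confined to $Y$.

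The second step is to show that this residual term vanishes as $P_t\to\infty$. Since $Y = \frac{K\sigma_k^2}{P_t||{\bf h}_k||^2}$ with $||{\bf h}_k||^2 \sim \text{Gamma}(L,1)$, the denominator is strictly positive with probability one, so for almost every channel realization $\frac{Y}{K-1} = \frac{K\sigma_k^2}{(K-1)P_t||{\bf h}_k||^2}\to 0$ as $P_t\to\infty$. Hence $\frac{Y}{K-1}\stackrel{a.s.}{\longrightarrow}0$, which in particular gives $\frac{Y}{K-1}\stackrel{p.}{\longrightarrow}0$.

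Finally I would combine the two steps. Because $X'$ is a fixed random variable, it trivially satisfies $X'\stackrel{d.}{\longrightarrow}X'$, and since $\frac{Y}{K-1}\stackrel{p.}{\longrightarrow}0$, Slutsky's theorem yields $\frac{Y}{K-1}+X'\stackrel{d.}{\longrightarrow}X'$, which is exactly \eqref{SINRk_dist_limit}. Equivalently, one can bypass Slutsky altogether: the a.s. convergence of the residual gives $\frac{1}{(K-1)\text{SINR}_k}\stackrel{a.s.}{\longrightarrow}X'$ directly, which is strictly stronger than the claimed convergence in distribution.

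I do not anticipate a genuine obstacle here; the argument is essentially an algebraic decomposition followed by a standard limiting theorem. The only points requiring a little care are to observe that $X'$ is invariant in $P_t$, so that the limiting law is precisely the law of $X'$ rather than some shifted or degenerate law, and to justify the almost-sure positivity of $||{\bf h}_k||^2$ so that $Y\to0$ holds pathwise.
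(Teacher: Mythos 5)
Your proof is correct, and it takes a genuinely different route from the paper's. The paper works at the level of distribution functions: it writes $F_{\text{SINR}_k}(\gamma)$ as the integral in \eqref{CDF_approx} involving $\Gamma\big(L,\cdot\big)/\Gamma(L)$, passes the limit $P_t\to\infty$ inside the integral to obtain $\lim_{P_t\to\infty}F_{\text{SINR}_k}(\gamma)=1-F_{X'}(\xi)$, and then reads off the limiting CDF of $((K-1)\text{SINR}_k)^{-1}$. You instead work pathwise on the common probability space: the identity $\frac{1}{(K-1)\text{SINR}_k}=\frac{Y}{K-1}+X'$ follows immediately from \eqref{SINR_divide_eq}, the term $X'$ is $P_t$-invariant, and $Y=\frac{K\sigma_k^2}{P_t\|{\bf h}_k\|^2}\to 0$ almost surely since $\|{\bf h}_k\|^2>0$ a.s.; Slutsky (or just a.s.\ convergence plus the implication a.s.\ $\Rightarrow$ in distribution) finishes it. Your argument is shorter, avoids the limit--integral interchange that the paper justifies only loosely, and in fact establishes the strictly stronger conclusion $\frac{1}{(K-1)\text{SINR}_k}\stackrel{a.s.}{\longrightarrow}X'$. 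What the paper's heavier route buys is the explicit closed-form limit CDF $1-F_{X'}(\xi)$ of $\text{SINR}_k$ itself in \eqref{SINR_limit_X_prime}, which is reused later for the high-SNR Beta approximation \eqref{SINRk_beta_limit_eq}; if you wanted that byproduct you would still need to convert your a.s.\ statement back into a statement about $F_{\text{SINR}_k}$ at its continuity points.
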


\begin{proof}
We first rewrite the CDF of $\text{SINR}_k$ in \eqref{CDF_general} as \eqref{CDF_approx}, shown at the top of the next page, where $(a)$ follows from the CDF of $Y$ which follows an inverse Gamma distribution, and where  $f_{X'}(x)$ denotes the PDF of $X'$. In \eqref{CDF_approx},  $\xi \triangleq \min\big\{ \frac{1}{(K-1)\gamma}, 1 \big\}$ for any $\gamma>0$.
\begin{figure*}
\begin{align}\label{CDF_approx}
    F_{\text{SINR}_k} (\gamma) \!
    =\!  \Pr\left\{ Y \ge \Big( \frac{1}{\gamma} - (K-1) X' \Big) \right\} 
    \overset{(a)}{=} 1 - \frac{1}{\Gamma(L)} \int_0^{\xi} \Gamma\bigg(L, \frac{K \sigma_k^2}{P_t} \Big( \frac{1}{\gamma} - (K-1) x \Big)^{-1} \bigg)  f_{X'} (x) \dd x
\end{align}
\rule{18cm}{0.01cm}
\end{figure*}
As $x \to 0$, we have  that $\lim_{x\to 0} \frac{\Gamma(L,x)}{\Gamma(L)} =1$.
Then the limit CDF of $\text{SINR}_k$ in \eqref{CDF_approx} is of the form
\begin{align} \label{SINR_limit_X_prime}
 \lim_{P_t \to \infty} F_{\text{SINR}_k} (\gamma) 
    & = 1 - F_{X'}( \xi ), 
\end{align}
where the exchange of the limit and the integral is allowable because the integral is finite and the integrand is non-negative, and where $F_{X'}(x)$ denotes the CDF of $X'$.
So we can derive the limit CDF of $((K-1) \text{SINR}_k)^{-1}$ as
\begin{align}
    &\lim_{P_t \to \infty} \Pr\left\{  \frac{1}{(K-1) \text{SINR}_k} \le x  \right\} \notag\\ 
    &\hspace{0.5cm}= 1 -  \lim_{P_t \to \infty}  F_{\text{SINR}_k}\left(  \frac{1}{x (K-1)}  \right) 
    \overset{(a)}{=} F_{X'}(x),
\end{align}
where $(a)$ follows from \eqref{SINR_limit_X_prime} after considering $X' \in [0,1]$.
The above directly leads to the convergence result in \eqref{SINRk_dist_limit}.
\end{proof}

\subsection{Massive MIMO Regime}
In what follows, we consider the massive MIMO regime where $L$ goes to infinity while $K$ keeps finite\footnote{We note that if $L$ and $K$ go to infinity with a fixed ratio, $\text{SINR}_k$ will converge to a constant almost surely (cf. Lemma~\ref{EC_massive_lem}).}.
\begin{lemma}\label{SINRk_lim_lem}
    In massive MIMO, we have that:
    \begin{align}
        \frac{\text{SINR}_k}{L} \stackrel{d.}{\longrightarrow}  \Big(\frac{K \sigma_k^2}{P_t} + \text{Gamma}(K-1,1) \Big)^{-1}.
    \end{align}
\end{lemma}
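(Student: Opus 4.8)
The plan is to work directly with the representation \eqref{SINR_divide_eq} rather than passing to the limit in the integral of Theorem~\ref{SINR_MF_dist_thm}, which is far cleaner. Writing
$\frac{\text{SINR}_k}{L} = \big( LY + \sum_{i=1,i\neq k}^K L X_i \big)^{-1}$,
I would determine the limiting law of the denominator by treating its two pieces separately: first showing that $LY$ converges to a deterministic constant, then that the scaled interference $\sum_{i\neq k} L X_i$ converges in distribution to a Gamma variable, and finally merging the two via Slutsky's theorem and inverting with the continuous mapping theorem.

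For the noise term, recall $Y = \frac{K\sigma_k^2}{P_t ||{\bf h}_k||^2}$ with $||{\bf h}_k||^2 = \sum_{j=1}^{L} |h_{kj}|^2$ a sum of $L$ i.i.d.\ unit-mean exponential variables. By the strong law of large numbers $||{\bf h}_k||^2/L \stackrel{a.s.}{\longrightarrow} 1$, so $LY = \frac{K\sigma_k^2}{P_t}\cdot \frac{L}{||{\bf h}_k||^2} \stackrel{a.s.}{\longrightarrow} \frac{K\sigma_k^2}{P_t}$, a constant. For the interference, Lemma~\ref{Xi_Distri_Lem} gives $X_i \sim \text{Beta}(1,L-1)$ with CDF $1-(1-x)^{L-1}$, whence for fixed $t>0$ one has $\Pr\{L X_i \le t\} = 1 - (1 - t/L)^{L-1} \to 1 - e^{-t}$ as $L\to\infty$, i.e.\ $L X_i \stackrel{d.}{\longrightarrow} \text{Exp}(1)$. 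Since the $\{X_i\}$ are mutually independent (Lemma~\ref{Xi_Distri_Lem}), the finite vector $(L X_i)_{i\neq k}$ converges jointly in distribution to a tuple of independent $\text{Exp}(1)$ variables, because the joint characteristic function factorizes and each factor converges; applying the continuous mapping theorem to the summation map then yields $\sum_{i=1,i\neq k}^K L X_i \stackrel{d.}{\longrightarrow} \text{Gamma}(K-1,1)$.

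To assemble the pieces, note that $LY$ tends to the deterministic constant $K\sigma_k^2/P_t$, so Slutsky's theorem allows me to add it to the weakly convergent interference sum without knowing the joint law of the two pieces, giving $LY + \sum_{i\neq k} L X_i \stackrel{d.}{\longrightarrow} \frac{K\sigma_k^2}{P_t} + \text{Gamma}(K-1,1)$. The limit is bounded below by $K\sigma_k^2/P_t>0$ and is therefore strictly positive, so $z\mapsto 1/z$ is continuous on its support and the continuous mapping theorem delivers the claimed convergence of $\text{SINR}_k/L$.

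I expect the delicate point to be exactly this assembly step: promoting the marginal convergence $L X_i \stackrel{d.}{\longrightarrow}\text{Exp}(1)$ to convergence of the \emph{sum} (which rests essentially on the mutual independence supplied by Lemma~\ref{Xi_Distri_Lem}), and then merging the almost-sure constant limit of $LY$ with the in-distribution limit of the interference. Slutsky's theorem is precisely what makes this mismatch in modes of convergence harmless, and the strict positivity of the limit is what keeps the final reciprocal map continuous.
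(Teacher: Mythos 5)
Your proposal is correct and follows essentially the same route as the paper: the same decomposition of the denominator, SLLN for the noise term, the Beta-to-exponential CDF limit for $LX_i$, independence to pass to the sum (you via joint convergence plus continuous mapping, the paper via the product of characteristic functions — the same underlying fact), and inversion by the continuous mapping theorem. Your explicit appeal to Slutsky's theorem to merge the almost-sure constant limit with the in-distribution limit is a small point of added care that the paper leaves implicit in its final "combining" step.
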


\begin{proof}
We first rewrite the expression for $\text{SINR}_k$ in \eqref{SINR_divide_eq} as
\begin{align}
    \text{SINR}_k = \frac{ L  }{ \frac{K \sigma_k^2}{P_t||{\bf h}_k||^2/L} +  \sum_{i=1, i \neq k}^K  L X_i}
\end{align}
The Strong Law of Large Numbers (SLLN) \cite{Papoulis} tells us that
\begin{align}\label{Y_lim_L_eq}
    \frac{K \sigma_k^2}{P_t ||{\bf h}_k||^2/L}  \stackrel{a.s.}{\longrightarrow}   \frac{K \sigma_k^2}{P_t}, \text{  as } L \to \infty.
\end{align}
For the CDF of $L X_i$, we have that
\begin{align}
   \lim_{L \to \infty} F_{L X_i}(x) 
   &\overset{(a)}{=} 1-  \lim_{L \to \infty}   \Big(1-\frac{x}{L} \Big)^{L-1} \mathbb{I}\Big\{ \frac{x}{L} \le 1 \Big\} \notag \\
   &= 1- \exp(-x),
\end{align}
where $(a)$ follows from the CDF in Lemma \ref{Xi_Distri_Lem}, and where $\mathbb{I}\{\cdot\}$ denotes the well-known indicator function.
The above indicates that $L X_i$ converges to an exponentially distributed random variable $X_i'$ with unit-mean in distribution.\footnote{
We note that the fact of $X_n \stackrel{d.}{\longrightarrow} X_\infty $ and $Y_n \stackrel{d.}{\longrightarrow} Y_\infty$ does not generally imply that $X_n + Y_n \stackrel{d.}{\longrightarrow} X_\infty + Y_\infty$.} However,  $L X_i$ does not converge to $X_i' \sim \text{Exp}(1)$ in probability. For any $\epsilon>0$, we have that
\begin{align}\label{Xi_prop_conver}
    & \Pr\left\{ \big| X_i' - L X_i \big| \le \epsilon \right\}
     = \Pr\left\{L X_i -  \epsilon  \le  X_i' \le L X_i +\epsilon \right\} \notag\\
    &\hspace{0.5cm}= \mathbb{E}_{X_i}\Big\{  F_{X_i'} (L X_i +\epsilon) - F_{X_i'}(L X_i -  \epsilon)  \Big\}\notag\\
    &\hspace{0.5cm} \overset{(a)}{=}  \mathbb{E}_{X_i}\Big\{ \exp\big( - (L X_i -\epsilon) \big)  - \exp\big( - (L X_i +\epsilon) \big)  \Big\} \notag\\
    &\hspace{0.5cm} = \big( \exp(\epsilon) - \exp(-\epsilon) \big) \mathbb{E}_{X_i}\Big\{ \exp\big( - L X_i  \big)  \Big\} \notag\\
    &\hspace{0.5cm}\overset{(b)}{\le} \exp(\epsilon) - \exp(-\epsilon),
\end{align}
where $(a)$ follows from the exponential distribution of $X_i'$, and $(b)$ follows from the fact that $\exp(-x)\le 1$ for any $x\ge 0$. The above shows that $LX_i$ does not converge to an exponential distribution in probability.  For example, when $\epsilon=0.1$, the probability in  \eqref{Xi_prop_conver} is always upper bounded by 0.2003..., which is strictly less than 1 regardless of how large $L$ is.

The fact that $L X_i \stackrel{d.}{\longrightarrow} \text{Exp}(1)$ implies the limit CF of $L X_i$ equaling the CF of $\text{Exp}(1)$, i.e.,
$
    \lim_{L \to \infty} \text{CF}_{L X_i}(t) 
    = ( 1 - \jmath t )^{-1}.
$
Then due to the independence among $\{X_i\}_{i=1,i\neq k}^K$ (cf. Lemma~\ref{Xi_Distri_Lem}), we can derive the limit CF of $L X \triangleq \sum_{i=1, i \neq k}^K L X_i$ as
\begin{align}
    \lim_{L \to \infty} \text{CF}_{L X}(t) &=  \prod_{i=1, i\neq k}^K \lim_{L \to \infty} \text{CF}_{L X_i} (t) 
    = \big( 1 - \jmath t \big)^{-(K-1)}, \notag
\end{align}
which exactly equals the CF of $\text{Gamma}(K-1,1)$. So we have the convergence result:
\begin{align}\label{Xi_sum_lim_eq}
    \sum_{i=1, i\neq k}^K L X_i \stackrel{d.}{\longrightarrow} \text{Gamma}(K-1, 1), \text{ as } L \to \infty.
\end{align}
Combining \eqref{Y_lim_L_eq} and \eqref{Xi_sum_lim_eq} and using the Continuous Mapping Theorem \cite{Vaart_book} finally yields Lemma \ref{SINRk_lim_lem}.
\end{proof}

\subsection{Beta Approximation}
As the exact distribution of the sum of multiple i.i.d. Beta distributed random variables has not yet been revealed, we alternatively consider a new Beta distributed random variable to approximate this sum \cite{Arjun}.  Specifically, $X' \triangleq \frac{1}{K-1} \sum_{i=1, i \neq k}^K X_i \in [0,1]$ is approximately modelled by a Beta distribution with the first shape parameter $\alpha$ and the second shape parameter $\beta$. Then, we use ${\rm B}(\cdot,\cdot)$ and $\Gamma(\cdot,\cdot)$ to respectively denote the Beta function and the upper incomplete Gamma function \cite{Gradshteyn}. A general approximation is shown in Lemma~\ref{SINRk_beta_lem}, where $\xi \triangleq \min\big\{ \frac{1}{(K-1)\gamma}, 1 \big\}$ for any $\gamma>0$.

\begin{lemma}\label{SINRk_beta_lem}
The CDF of $\text{SINR}_k$ can be approximated as
    \begin{align}\label{SINRk_approx_eq}
     &F_{\text{SINR}_k} (\gamma) 
     \approx  1 - \frac{1}{\Gamma(L) {\rm B}(\alpha, \beta)} \int_0^{\xi} x^{\alpha-1} (1 - x)^{\beta -1}  \notag\\
         &\hspace{2.2cm} \times \Gamma\bigg(L, \frac{K \sigma_k^2}{P_t} \Big( \frac{1}{\gamma} - (K-1) x \Big)^{-1} \bigg)  \dd x,
    \end{align}
where $\alpha$ and $\beta$ respectively take the forms
\begin{align}\label{alpha_beta_eq}
    \alpha = \frac{(K-1)(L+1) - 1}{L}, \ \  \ 
    \beta = \alpha (L-1).
\end{align}
\end{lemma}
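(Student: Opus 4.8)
The plan is to start from the exact CDF in \eqref{CDF_approx}, which already isolates the law of $X' \triangleq \frac{1}{K-1}\sum_{i\neq k} X_i$ through its density $f_{X'}(x)$, and to replace this (intractable) density by that of a $\text{Beta}(\alpha,\beta)$ law whose parameters are fixed by matching the first two moments of $X'$. Since a Beta distribution is completely determined by its mean and variance, two matching equations suffice to pin down $\alpha$ and $\beta$, after which \eqref{SINRk_approx_eq} follows by direct substitution of the Beta density into \eqref{CDF_approx}.

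First I would compute the exact moments of $X'$. By Lemma~\ref{Xi_Distri_Lem}, each $X_i\sim\text{Beta}(1,L-1)$, so $\mathbb{E}[X_i]=1/L$ and $\mathrm{Var}(X_i)=(L-1)/(L^2(L+1))$. Because the $X_i$ are independent (again Lemma~\ref{Xi_Distri_Lem}) and $X'$ is their arithmetic mean over $K-1$ terms, I obtain $\mathbb{E}[X']=1/L$ and $\mathrm{Var}(X')=(L-1)/\big((K-1)L^2(L+1)\big)$.

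Next I would impose the moment-matching conditions $\frac{\alpha}{\alpha+\beta}=\mathbb{E}[X']$ and $\frac{\alpha\beta}{(\alpha+\beta)^2(\alpha+\beta+1)}=\mathrm{Var}(X')$. The mean equation immediately gives $\alpha+\beta=\alpha L$, hence $\beta=\alpha(L-1)$, which is exactly the second relation in \eqref{alpha_beta_eq}. Substituting $\alpha+\beta=\alpha L$ into the variance equation cancels the common $\alpha^2$ and $(L-1)$ factors and reduces it to $\alpha L+1=(K-1)(L+1)$, yielding $\alpha=\big((K-1)(L+1)-1\big)/L$, the first relation in \eqref{alpha_beta_eq}. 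Finally, inserting the density $\frac{1}{{\rm B}(\alpha,\beta)}x^{\alpha-1}(1-x)^{\beta-1}$ of $\text{Beta}(\alpha,\beta)$ in place of $f_{X'}(x)$ in \eqref{CDF_approx} produces \eqref{SINRk_approx_eq}.

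The argument carries no deep obstacle; its entire content lies in the moment-matching algebra, and the only thing to verify is that the resulting system admits the stated closed form. The pleasant collapse in the variance equation---where the $\alpha$-dependence drops out except through the single factor $\alpha L+1$---is precisely what makes both parameters explicit, so I would check that cancellation most carefully. One should also note that this is an \emph{approximation}, not an identity: matching only the first two moments does not reproduce the true density of the sum of i.i.d.\ Beta variates, so no exactness claim is made beyond agreement of mean and variance.
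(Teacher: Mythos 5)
Your proposal is correct and follows essentially the same route as the paper: compute the first two moments of $X'$ from Lemma~\ref{Xi_Distri_Lem}, match them to those of a $\text{Beta}(\alpha,\beta)$ law to obtain \eqref{alpha_beta_eq}, and substitute the resulting Beta density for $f_{X'}$ in \eqref{CDF_approx}. Your version merely spells out the moment-matching algebra that the paper leaves implicit.
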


\begin{proof}
To approximately model $X'$ as a Beta distribution, we need to match the first two moments of $X'$. For that, we can derive $\alpha$ and $\beta$ by the following:
\begin{align}
    &\mathbb{E}\{X'\}  = \mathbb{E}\Big\{\frac{1}{K-1} \sum_{i \neq k} X_i \Big\}   = \frac{1}{L} \notag\\
    & \text{Var}\{X'\} = \text{Var}\Big\{ \frac{1}{K-1} \sum_{i \neq k} X_i \Big\}   = \frac{(L-1)}{L^2 (K-1) (L+1)}, \notag
\end{align}
which easily leads to the expressions for $\alpha$ and $\beta$ in \eqref{alpha_beta_eq}.

By approximately modelling $X'$ as $\text{Beta}(\alpha,\beta)$ and substituting the corresponding PDF into \eqref{CDF_approx}, we can finally derive \eqref{SINRk_approx_eq}, which concludes the proof.
\end{proof}

\begin{remark}
    The integral in \eqref{SINRk_approx_eq} is finite and the integrand is continuous and real-valued over $[0,\xi]$. Furthermore, as $L$ is a positive integer, we can express $\Gamma(L,x)/\Gamma(L)$ as $ \exp(-x) \sum_{\vartheta=0}^{L-1}\frac{x^\vartheta}{\vartheta !}$ (elementary functions). Therefore, we can find the numerical solution of \eqref{SINRk_approx_eq} very efficiently. 
\end{remark}

\begin{remark}
When we approximately model $X'$ as a Beta distribution in Lemma~\ref{SINRk_dist_limit_lem}, a direct approximation for $F_{\text{SINR}_k}(\gamma)$ in high SNR is derived as 
\begin{align}\label{SINRk_beta_limit_eq}
     F_{\text{SINR}_k} (\gamma) & \approx 1 - \text{I}_\xi (\alpha, \beta),
\end{align}
where $\text{I}_\cdot(\cdot,\cdot)$ denotes the regularized incomplete Beta function \cite{Gradshteyn}, which is used to express the CDF of Beta distribution. 
\end{remark}

\section{Ergodic Rate Analysis}
In this section, we analyze the ergodic rate under MF precoding. Due to the space limitation, we only present two main results: $i)$ a convergence result in massive MIMO, and $ii)$  a robust approximation.

Let us consider the massive MIMO regime where $L, K \to \infty$ and the ratio $c = \frac{L}{K}$ is fixed. 
\begin{lemma}\label{EC_massive_lem}
As $L,K \to \infty$ with a fixed ratio $c$, we have the convergence result:
    \begin{align}\label{rate_lim_eq}
    R_k \stackrel{a.s.}{\longrightarrow}  \ln\Big( 1 + \frac{c \ P_t}{P_t + \sigma_k^2} \Big).
\end{align}
\end{lemma}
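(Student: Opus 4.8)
The plan is to show that both the numerator and the denominator of $\text{SINR}_k$, after appropriate normalization by $L$, converge almost surely to deterministic constants, and then invoke continuity of $\ln(1+\cdot)$ to obtain the stated limit. Starting from the rewritten form
\begin{align}\label{plan_start}
    \text{SINR}_k = \frac{ \|{\bf h}_k\|^2/L }{ \frac{K \sigma_k^2}{P_t L} +  \frac{1}{L}\sum_{i=1, i \neq k}^K  \|{\bf h}_k\|^2 X_i / \|{\bf h}_k\|^2 }, \notag
\end{align}
I would treat each random object separately under the regime $L,K\to\infty$ with $L/K=c$ fixed.

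\emph{Numerator.} Since $\|{\bf h}_k\|^2 \sim \text{Gamma}(L,1)$ is a sum of $L$ i.i.d.\ unit-mean exponentials, the Strong Law of Large Numbers gives $\|{\bf h}_k\|^2/L \stackrel{a.s.}{\longrightarrow} 1$, exactly as used in \eqref{Y_lim_L_eq}. The noise term $\frac{K\sigma_k^2}{P_t L} = \frac{\sigma_k^2}{c P_t}$ is deterministic and constant in this regime.

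\emph{Interference.} The main work is to show $\frac{1}{L}\sum_{i\neq k} \big|{\bf u}_k^T {\bf h}_i^*\big|^2/\|{\bf h}_i^*\|^2 \cdot \|{\bf h}_k\|^2 \stackrel{a.s.}{\longrightarrow} \frac{1}{c}$ after dividing through, or equivalently that the interference term in the denominator behaves like $(K-1)\,\mathbb{E}\{X_i\}$. By Lemma~\ref{Xi_Distri_Lem} the $X_i$ are i.i.d.\ with $\mathbb{E}\{X_i\}=1/L$ and, crucially, independent of ${\bf u}_k$, so $\sum_{i\neq k} X_i$ is a sum of $K-1$ i.i.d.\ $\text{Beta}(1,L-1)$ variables with mean $(K-1)/L \to 1/c$. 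The obstacle is that here \emph{both} the number of summands and the law of each summand vary with $L$, so the ordinary SLLN does not apply directly. I expect the cleanest route is to bound the variance: $\text{Var}\{\sum_{i\neq k} X_i\} = (K-1)\,\text{Var}\{X_i\} = \frac{(K-1)(L-1)}{L^2(L+1)} \to 0$, which gives $L^2$-convergence and hence convergence in probability to $1/c$. Promoting this to almost sure convergence requires a quantitative tail bound and a Borel--Cantelli argument; since $X_i\le 1$ is bounded, a Hoeffding- or Bernstein-type concentration inequality for the triangular array, combined with summability of the resulting tail probabilities along the sequence $(L,K)=(cK,K)$, will deliver $\sum_{i\neq k} X_i \stackrel{a.s.}{\longrightarrow} 1/c$.

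\emph{Assembly.} Combining the three limits, the denominator of $\text{SINR}_k$ satisfies $\frac{K\sigma_k^2}{P_t\|{\bf h}_k\|^2} + \sum_{i\neq k} X_i \stackrel{a.s.}{\longrightarrow} \frac{\sigma_k^2}{c P_t} + \frac{1}{c} = \frac{\sigma_k^2 + P_t}{c P_t}$, so $\text{SINR}_k \stackrel{a.s.}{\longrightarrow} \frac{c P_t}{P_t + \sigma_k^2}$ by the Continuous Mapping Theorem applied to the reciprocal (the limit is a strictly positive constant, so no division-by-zero issue arises). Finally, continuity of $x\mapsto \ln(1+x)$ and one more application of the Continuous Mapping Theorem yield $R_k \stackrel{a.s.}{\longrightarrow} \ln\big(1 + \frac{c P_t}{P_t+\sigma_k^2}\big)$, which is \eqref{rate_lim_eq}. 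The single genuinely delicate point is upgrading the interference convergence from in-probability to almost sure in a triangular-array setting; everything else is routine once that is in place.
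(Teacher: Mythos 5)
Your decomposition and assembly are exactly the paper's: normalize by $L$, send $\|{\bf h}_k\|^2/L\to 1$ by the SLLN, send the aggregate interference to $1/c$, and finish with the Continuous Mapping Theorem. The one place you diverge is also the one place you are \emph{more} careful than the paper. The paper's proof simply declares that $\{LX_i\}$ are i.i.d.\ with unit mean and finite variance and ``uses the SLLN'' to conclude $\frac{1}{K-1}\sum_{i\neq k} LX_i \stackrel{a.s.}{\longrightarrow} 1$; but as you correctly observe, the law of $LX_i$ (namely $L\cdot\mathrm{Beta}(1,L-1)$) changes with $L$, so this is a row-wise independent triangular array and the classical SLLN does not literally apply. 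Your plan --- a quantitative tail bound plus Borel--Cantelli along the sequence $(L,K)=(cK,K)$ --- is the right repair. One technical caveat on executing it: plain Hoeffding with the crude bound $X_i\le 1$ gives $\Pr\{|\sum_{i\neq k}X_i-\frac{K-1}{L}|\ge\epsilon\}\le 2\exp(-2\epsilon^2/(K-1))$, which tends to $1$ rather than $0$, and Chebyshev via $\mathrm{Var}\{\sum_{i\neq k}X_i\}=O(1/L)$ gives only a non-summable $O(1/L)$ bound. The bound that works exploits the tail $\Pr\{X_i>x\}=(1-x)^{L-1}\le e^{-(L-1)x}$, i.e.\ stochastic domination of $X_i$ by $\mathrm{Exp}(L-1)$, whence $\mathbb{E}\{e^{sX_i}\}\le \frac{L-1}{L-1-s}$ and a Chernoff bound decaying exponentially in $K$; that is summable and Borel--Cantelli closes the argument. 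With that substitution your proof is complete and, on the delicate step, strictly more rigorous than the one in the paper.
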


\begin{proof}
First of all, we rewrite $\text{SINR}_k$ in \eqref{SINR_divide_eq} as
\begin{align}
    \text{SINR}_k =  \frac{P_t L}{K}   \Big( \frac{\sigma_k^2}{||{\bf h}_k||^2/L} + \frac{P_t}{K} \sum\nolimits_{{i=1, i \neq k}}^K  L X_i  \Big)^{-1}.
\end{align}
Thanks to Lemma~\ref{Xi_Distri_Lem}, we can easily have that $\{L X_i\}_{i=1,i\neq k}^K$ are i.i.d. random variables, each with unit-mean and finite variance. Then, as $L,K\to \infty$, we can use the SLLN to derive that
\begin{align}
    \frac{1}{K-1} \sum\nolimits_{i=1,i\neq k}^K L X_i \stackrel{a.s.}{\longrightarrow} 1, \text{ and } 
    \frac{||{\bf h}_k||^2}{L}  \stackrel{a.s.}{\longrightarrow} 1.
\end{align}
Using the Continuous Mapping Theorem finally yields \eqref{rate_lim_eq}.
\end{proof}

\begin{remark}
    The convergence result in \eqref{rate_lim_eq} equals the well-known asymptotic rate under MF precoding in massive MIMO (cf. \cite{Rusek}). It is indeed the first time to \emph{rigorously} prove how the exact rate converges to the asymptotic rate.\footnote{This asymptotic rate was previously derived from various approximation methods, e.g., treating either the useful signal or the interference to be deterministic  (cf. \cite{Yeon_Geun_Lim, Feng}), or separately taking their expectations regardless of the correlations (cf. \cite{Zhang}). Interestingly, using Jensen's Inequality (cf. \cite{Hien,Zhao_SPAWC}) can also lead to this asymptotic rate.} 
\end{remark}

Next, we propose a robust approximation in general cases.
Let $\mu_Z$ and $\sigma_Z^2$ denote the mean and the variance of $Z \triangleq Y + \sum_{i=1,i\neq k}^K X_i$ respectively. With the help of Lemma~\ref{Xi_Distri_Lem}, we can easily  derive that
\begin{align}
    &\mu_Z =  \frac{K \sigma_k^2}{P_t(L-1)} + \frac{K-1}{L} \label{mu_Z_eq}\\
    &\sigma_Z^2 = \frac{K^2 \sigma_k^4}{P_t^2 (L-1)^2 (L-2)} +  \frac{(K-1)(L-1)}{L^2 (L+1)} \label{sigmaZ_eq}
\end{align}
Now, we present a robust approximation for the ergodic rate.
\begin{lemma}\label{Robust_lem}
For $L>2$, we can robustly approximate $\bar R_k $ as
\begin{align}\label{R_k_robust_eq}
    \bar R_k \approx \ln\Big( 1 + \frac{1}{\mu_Z} \Big) + \frac{\sigma_Z^2}{2} \frac{2 \mu_Z + 1}{\mu_Z^2 (\mu_Z+1)^2}.
\end{align}
\end{lemma}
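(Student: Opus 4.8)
The plan is to recognize that $\bar R_k = \mathbb{E}\{\ln(1+\text{SINR}_k)\} = \mathbb{E}\{g(Z)\}$, where $Z \triangleq Y + \sum_{i=1,i\neq k}^K X_i$ as in \eqref{SINR_divide_eq} and $g(z) \triangleq \ln(1 + 1/z)$, and then to approximate this expectation by a second-order Taylor (delta-method) expansion of $g$ about the mean $\mu_Z$. First I would rewrite $g(z) = \ln(z+1) - \ln(z)$ so that its derivatives become elementary: $g'(z) = \frac{1}{z+1} - \frac{1}{z} = -\frac{1}{z(z+1)}$, and $g''(z) = \frac{1}{z^2} - \frac{1}{(z+1)^2} = \frac{2z+1}{z^2(z+1)^2}$. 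These are the only analytic ingredients the argument needs.

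Next I would write the truncated expansion $g(Z) \approx g(\mu_Z) + g'(\mu_Z)(Z - \mu_Z) + \tfrac{1}{2} g''(\mu_Z)(Z - \mu_Z)^2$ and take expectations termwise. The linear term vanishes since $\mathbb{E}\{Z - \mu_Z\} = 0$, and the quadratic term contributes $\tfrac{1}{2} g''(\mu_Z)\,\mathbb{E}\{(Z-\mu_Z)^2\} = \tfrac{1}{2} g''(\mu_Z)\,\sigma_Z^2$. Substituting the derivative formulas gives directly $\bar R_k \approx \ln(1 + 1/\mu_Z) + \frac{\sigma_Z^2}{2}\,\frac{2\mu_Z+1}{\mu_Z^2(\mu_Z+1)^2}$, which is precisely \eqref{R_k_robust_eq}. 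Crucially, the two moments $\mu_Z$ and $\sigma_Z^2$ are already available in closed form from \eqref{mu_Z_eq} and \eqref{sigmaZ_eq} (themselves obtained via the independence and marginal laws established in Lemma~\ref{Xi_Distri_Lem}), so no additional moment computation is required. This is also where the hypothesis $L>2$ enters: it guarantees that the inverse-Gamma component $Y$ has a finite second moment, reflected in the $(L-2)$ factor in the denominator of \eqref{sigmaZ_eq}, so that $\sigma_Z^2 < \infty$ and the expansion is well-posed.

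The main subtlety here is conceptual rather than computational: since the statement is an approximation (the symbol $\approx$), I would not attempt to bound the truncation error rigorously. The accuracy of the delta method rests on the neglected higher-order curvature contributions being small; this is plausible because $g$ is smooth and convex with rapidly decaying derivatives away from the origin, while the dispersion of $Z$ relative to $\mu_Z$ shrinks as $L$ grows (by the concentration implied by the i.i.d.\ structure in Lemma~\ref{Xi_Distri_Lem}). If a justification of fidelity were wanted, one could observe that the leading neglected term pairs $g'''(\mu_Z)$ with $\mathbb{E}\{(Z-\mu_Z)^3\}$ and is of higher order in the fluctuation scale than the retained quadratic term; but for the lemma as stated, the truncated expansion is simply adopted as the definition of the approximation.
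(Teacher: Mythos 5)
Your proposal is correct and matches the paper's approach: the paper's proof is a one-line reference to the second-order moment-matching approximation of Holtzman et al.\ applied to $\bar R_k = \mathbb{E}\{\ln(1+1/Z)\}$, which is exactly the Taylor/delta-method expansion you carry out, and your derivatives $g'(z)=-1/(z(z+1))$ and $g''(z)=(2z+1)/(z^2(z+1)^2)$ reproduce \eqref{R_k_robust_eq} verbatim. Your observation that $L>2$ is needed for the finiteness of the second moment of the inverse-Gamma term (the $(L-2)$ factor in \eqref{sigmaZ_eq}) is also the correct reading of that hypothesis.
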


\begin{proof}
The proof follows from the approximation method in \cite{Holtzman,Zhao2020_TWC,Zhao_GK_CL} by considering that $\bar R_k = \mathbb{E}\{ \ln ( 1 + \frac{1}{Z} ) \}$.
\end{proof}

\section{Numerical Results}
In this section, we will present some numerical results to demonstrate the accuracy of the developed analytical models. 

Fig.~\ref{OP_fig_ref} plots the outage probability at the $k$-th user for different numbers of transmit antennas. As expected, the outage probability first decreases and then arrives at a floor as $P_t$ increases because MF precoding is interference-limited. The increase in the number of transmit antennas improves the outage performance due to more spatial diversity brought about by a larger-scale antenna array. Importantly, both the exact and approximate results derived separately from Theorem~\ref{SINR_MF_dist_thm} and Lemma~\ref{SINRk_beta_lem} always match the simulated results very well.

Fig.~\ref{CDF_lim_fig_ref} (left) plots the CDF of $( (K-1) \text{SINR}_k )^{-1}$ for different transmit powers, where the red dashed line represents the CDF of $X'$ in Lemma~\ref{SINRk_dist_limit_lem}. It is obvious that as $P_t$ increases, $( (K-1) \text{SINR}_k )^{-1}$ finally converges to the distribution of $X'$, which exactly obeys the convergence statement in Lemma~\ref{SINRk_dist_limit_lem}. We plot the CDF of $\text{SINR}_k/L$ by varying the number of transmit antennas in Fig.~\ref{CDF_lim_fig_ref} (right), where the red dashed line represents the limit distribution in Lemma~\ref{SINRk_lim_lem}. The convergence of the blue symbol lines to the red dashed line becomes apparent as $L$ increases, which demonstrates the accuracy of Lemma~\ref{SINRk_lim_lem}. 

\begin{figure}[!t]
             \vspace{-0.2cm}
             \centering
             \includegraphics[width= 3.5 in]{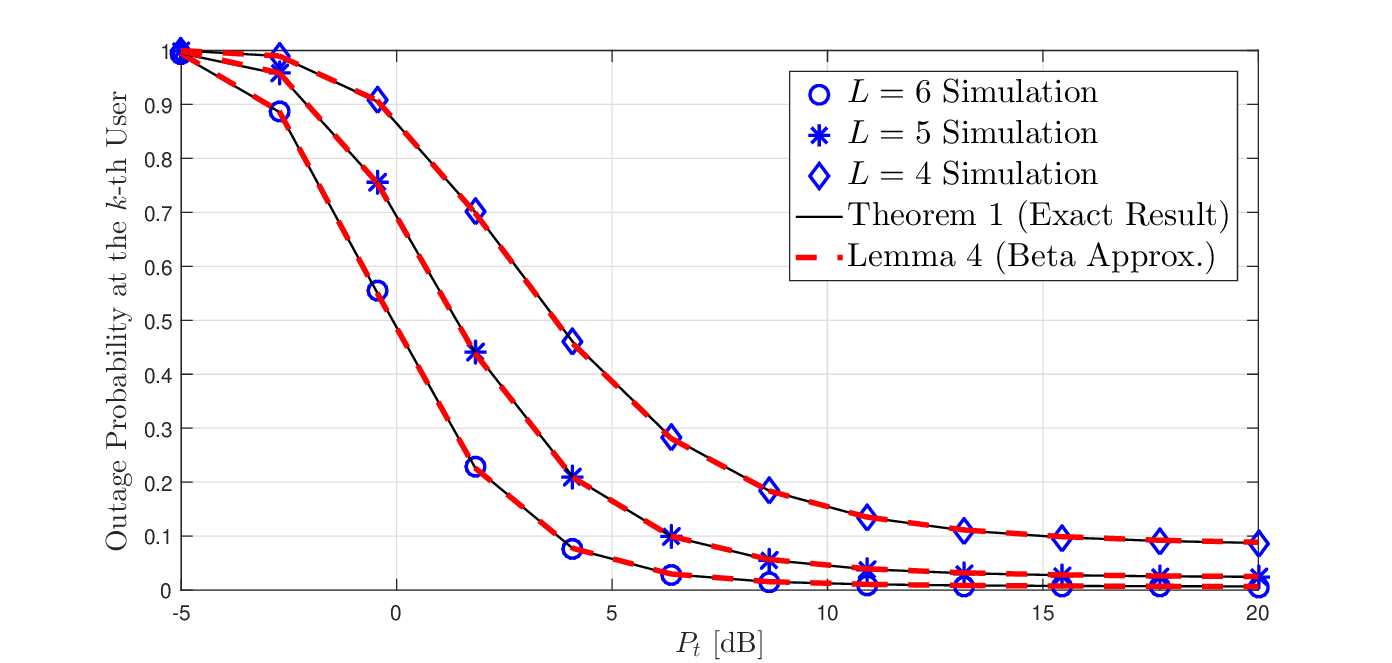}
             \vspace{-0.5cm}
             \captionsetup{font={footnotesize}}
        \caption{Outage probability versus $P_t$ for $\gamma =0.8$, $K=4$ and $\sigma_k^2=1$.}\label{OP_fig_ref}
\end{figure}	

             \begin{figure}[t]\centering\vspace{-0.3cm}
				\begin{subfigure}[t]{.12\textwidth}\centering		
				\includegraphics[width=1.6 in]{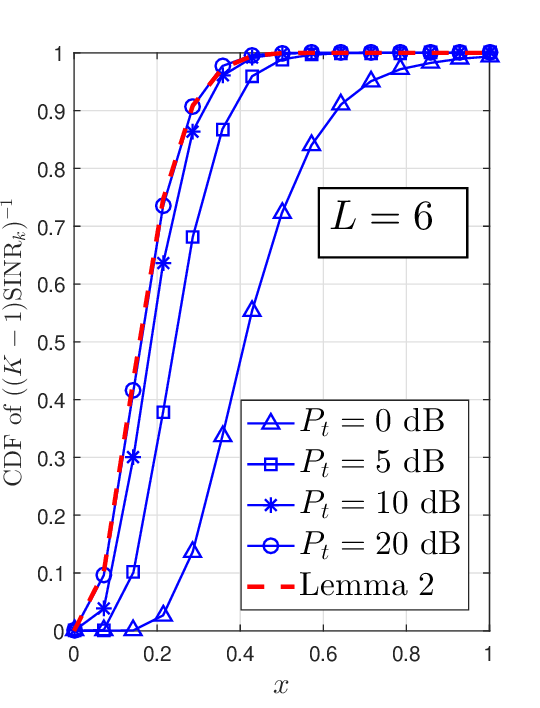}
				\end{subfigure}~
				\begin{subfigure}[t]{.45\textwidth}
                  \centering				
				\includegraphics[width=1.6 in]{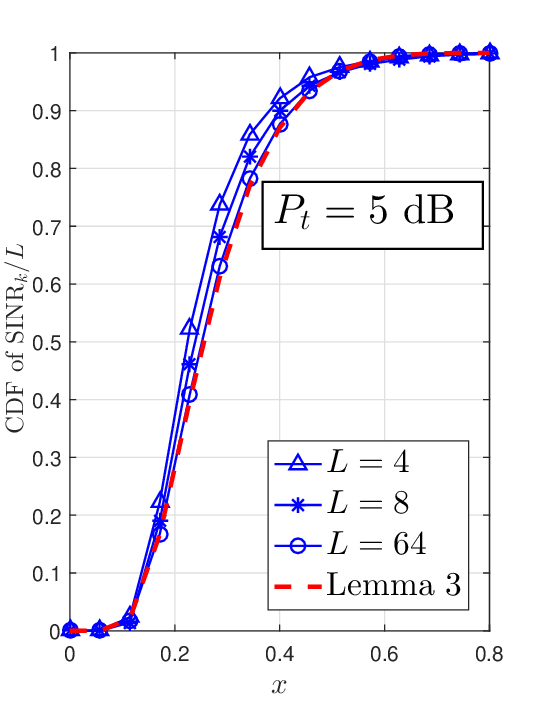}
				\end{subfigure}~\vspace{-1.5ex}
                \captionsetup{font={footnotesize}}
			\caption{CDFs of $\frac{1}{(K-1) \text{SINR}_k}$ and $\frac{\text{SINR}_k}{L}$ for $K=4$ and $\sigma_k^2=1$.}\label{CDF_lim_fig_ref}\vspace{-0.2cm}
		\end{figure}

The ergodic rate for the $k$-th user is plotted in Fig.~\ref{EC_fig_ref}. We note that the green triangle line and the red dashed line respectively stand for the results separately derived from Lemmas~\ref{EC_massive_lem} and \ref{Robust_lem}, while the result in the black star line represents a lower bound of the ergodic rate derived by using Jensen's Inequality, given by $\ln(1+1/\mathbb{E}\{\text{SINR}_k^{-1}\})$. We can observe that all the kinds of approximations perform well in low SNR, while both the Jensen's bound and the asymptotic rate fail to match the simulated result in the medium to high SNR regime, even when we increase the number of antennas from 8 to 12. In contrast, the robust approximation in Lemma~\ref{Robust_lem} always presents a high accuracy to the simulation in the whole SNR regime.  

\begin{figure}[!t]
             \centering
             \includegraphics[width= 3.5 in]{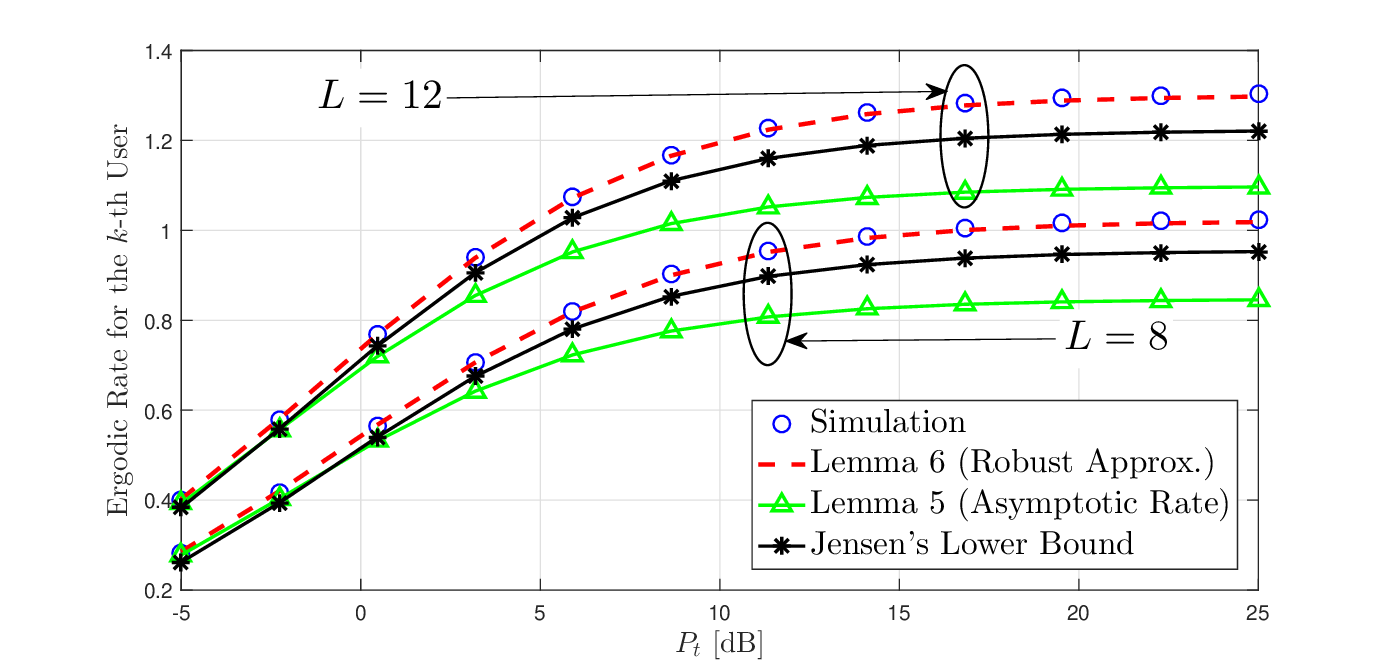}
             \vspace{-0.5cm}
             \captionsetup{font={footnotesize}}
        \caption{Ergodic rate versus $P_t$ for $K=6$ and $\sigma_k^2=1$.}\label{EC_fig_ref}\vspace{-0.3cm}
\end{figure}	

\section{Conclusion}
We revealed the exact SINR distribution under MF precoding for the first time, as well as developed several good approximations. Specifically, we rigorously proved that the SINR would converge to some specific distributions in high SNR and massive MIMO. Then we shifted to the ergodic rate analysis by deriving a robust approximation and a convergence result in massive MIMO where the latter told us that the exact rate converges to the well-known asymptotic rate almost surely. Finally, numerical results have demonstrated the accuracy of the analytical models. 

\appendices
\renewcommand{\thesectiondis}[2]{\Alph{section}:}

\section{Proof of Lemma~\ref{Xi_Distri_Lem}}\label{Proof_Xi_Distri_Lem}
The CDF of $X_i \in [0,1]$ can be written as
\begin{align}
    F_{X_i} (x) &= \Pr\left\{ \frac{ 1}{|| {\bf h}_i^* ||^2}   \big| {\bf u}_k^T {\bf h}_i^*  \big|^2 \le x \right\} \notag\\
    & = \Pr\left\{ {\bf h}_i^T {\bf u}_k^* {\bf u}_k^T {\bf h}_i^*  \le x  {\bf h}_i^T {\bf h}_i^* \right\} \notag\\
    &= \Pr\left\{    {\bf h}_i^T  \Big(  {\bf I}_L - \frac{1}{x} {\bf u}_k^* {\bf u}_k^T \Big) {\bf h}_i^* \ge 0 \right\}. \label{CDF_Xi_def}
\end{align}
We note that $\text{Rank}\big\{ \frac{1}{x} {\bf u}_k^* {\bf u}_k^T \big\} =1$.
We then eigen-decompose ${\bf I}_L - \frac{1}{x} {\bf u}_k^* {\bf u}_k^T $ (Hermitian matrix) as
\begin{align}
    {\bf I}_L - \frac{1}{x} {\bf u}_k^* {\bf u}_k^T  = {\bf V}_k {\bf \Omega} {\bf V}_k^H,
\end{align}
where the columns of ${\bf V}_k$ (each with unit-norm) are the eigenvectors of ${\bf I}_L - \frac{1}{x} {\bf u}_k^* {\bf u}_k^T$, and the diagonal matrix ${\bf \Omega} \in \mathbb{C}^{L \times L}$ is of the form
\begin{align}
    {\bf \Omega} = \text{Diag} \Big\{1-\frac{1}{x}, 1,1,\cdots,1  \Big\},
\end{align}
whose diagonal elements are the eigenvalues of ${\bf I}_L - \frac{1}{x} {\bf u}_k^* {\bf u}_k^T$.
Obviously, $ {\bf \Omega}$ is independent of ${\bf h}_k$. As ${\bf V}_k$ is a unitary matrix, the columns of ${\bf V}_k$ can span the linear space of $\mathbb{C}^{L \times 1}$.
Therefore, for any ${\bf h}_i^* \in \mathbb{C}^{L \times 1}$, we can always find a \emph{unique} vector ${\bf a}_i = [a_1, a_2, \cdots,a_L]^T \in \mathbb{C}^{L \times 1}$ such that
$
    {\bf h}_i^* = {\bf V}_k {\bf a}_i,
$
where ${\bf V}_k$ and ${\bf h}_i^*$ are independent.
As ${\bf V}_k = [{\bf v}_1, {\bf v}_2, \cdots, {\bf v}_L]$ is a unitary matrix, we can derive ${\bf a}_i$ as
\begin{align}
    {\bf a}_i = {\bf V}_k^H {\bf h}_i^*. 
\end{align}
Given that ${\bf h}_i^* \sim \mathcal{CN}({\bf 0}, {\bf I}_L)$,  each element $a_\ell = {\bf v}_\ell^H {\bf h}_i^*$ of ${\bf a}_i$ will be also Gaussian distributed with zero-mean and unit-variance. Furthermore, as ${\bf v}_1^H, {\bf v}_2^H, \cdots, {\bf v}_L^H$ are \emph{orthonormal} to each other, the elements of ${\bf a}_i$ are independent. This can be easily proved. For any $\ell' \neq \ell$, we have that
\begin{align}
  \mathbb{E}\{ a_{\ell'} a_\ell^* \} 
  =  {\bf v}_{\ell'}^H \mathbb{E}\{ {\bf h}_i^*  {\bf h}_i^T \} {\bf v}_\ell = {\bf v}_{\ell'}^H {\bf v}_\ell  = 0.
\end{align}
So $a_\ell$ and $a_{\ell'}$ are uncorrelated. As $a_\ell$ and $a_{\ell'}$ are both Gaussian distributed, they are independent. We also note that ${\bf a}_i$ is independent of ${\bf V}_k$ (or equivalently, ${\bf h}_k$) because ${\bf V}_k$ does not affect the distribution of ${\bf a}_i$ as long as ${\bf V}_k$ is always a unitary matrix.

By doing so, we can rewrite ${\bf h}_i^T  \big(  {\bf I}_L - \frac{1}{x} {\bf u}_k^* {\bf u}_k^T \big) {\bf h}_i^*$ as
\begin{align}
    &{\bf h}_i^T  \Big(  {\bf I}_L - \frac{1}{x} {\bf u}_k^* {\bf u}_k^T \Big) {\bf h}_i^*  
    = {\bf a}_i^H {\bf V}_k^H  \Big(  {\bf I}_L - \frac{1}{x} {\bf u}_k^* {\bf u}_k^T \Big) {\bf V}_k {\bf a}_i \notag\\
    &\hspace{1.5cm}  = {\bf a}_i^H {\bf \Omega} {\bf a}_i = |a_1|^2 \Big(1-\frac{1}{x}\Big) + \sum_{\ell=2}^L |a_\ell|^2.
\end{align}
We note that $|a_\ell|^2$ follows an exponential distribution with unit-mean given that $a_\ell \sim \mathcal{CN}(0,1)$.
Therefore, we can rewrite the CDF of $X_i$ in \eqref{CDF_Xi_def} as
\begin{align}
    F_{X_i} (x) 
    = \Pr\left\{ \frac{|a_1|^2}{|a_1|^2 + \sum_{\ell=2}^L |a_\ell|^2} \le x  \right\}.
\end{align}
As the probability of $\Pr\{X_i\le x\}$ is determined \emph{solely} by ${\bf a}_i$ which is independent of ${\bf h}_k$, we can conclude that $X_i$ is also independent of ${\bf h}_k$.
It is easy to derive that $\sum_{\ell=2}^L |a_\ell|^2 \sim \text{Gamma}(L-1,1)$, and then  $\frac{|a_1|^2}{|a_1|^2 + \sum_{\ell=2}^L |a_\ell|^2}$ has a beta distribution with the first shape parameter $1$ and the second shape parameter $L-1$, which leads to the CDF and PDF of $X_i$ in Lemma~\ref{Xi_Distri_Lem}.  

For any $j \in \{1,2,\cdots,K\}$ and $j \neq k,i$, the Gaussian distributed vectors ${\bf a}_i = {\bf V}_k^H {\bf h}_i^*$ and ${\bf a}_j = {\bf V}_k^H {\bf h}_j^*$ are independent because ${\bf h}_i^*$ and ${\bf h}_j^*$ are independently Gaussian distributed where $\mathbb{E}\{ {\bf h}_i^*   {\bf h}_j^T \} = {\bf 0}$. This can be observed by the fact that
$
    \mathbb{E}\{  {\bf a}_i {\bf a}_j^H\} 
    = \mathbb{E}\{ {\bf V}_k^H \mathbb{E}\{ {\bf h}_i^*   {\bf h}_j^T \}  {\bf V}_k \} = {\bf 0}.
$
This also indicates that $X_i$ and $X_j$ are independent.

\section{Proof of Theorem~\ref{SINR_MF_dist_thm}}\label{Proof_SINR_MF_dist_thm}
As $X_i \sim \text{Beta}(1,L-1)$, the CF of $X_i$ is of the form \cite{Papoulis}
\begin{align}
    \text{CF}_{X_i}(t) &\triangleq \mathbb{E}\{ \exp(\jmath t X_i)  \} \notag\\
           & = (L-1) \exp(\jmath t) (\jmath t)^{1-L} \Upsilon(L-1, \jmath t).
\end{align}
Let $ X \triangleq \sum_{i=1, i \neq k}^K  X_i$, where the summation terms are independent of each other (cf. Lemma~\ref{Xi_Distri_Lem}). Therefore, we can derive the CF of $X$ as
\begin{align}
    &\text{CF}_X(t) = \prod\nolimits_{i=1, i \neq k}^K \text{CF}_{X_i}(t) \notag\\
    &= (L-1)^{K-1} \frac{\exp(\jmath (K-1) t )}{ (\jmath t)^{(L-1)(K-1)} } \Big[ \Upsilon(L-1, \jmath t) \Big]^{K-1}.
\end{align}

Let $Z \triangleq Y + X $. 
We rewrite the CDF of $\text{SINR}_k$ in \eqref{SINR_divide_eq} as
\begin{align}\label{CDF_general}
    F_{\text{SINR}_k}(\gamma) &=  \Pr\left\{ \frac{ 1  }{ Y +  X} \le \gamma \right\}
    = 1 - F_Z\Big(\frac{1}{\gamma} \Big)
\end{align}
 As  $Y \sim \text{Inv-Gamma}(L,\frac{K\sigma_k^2}{P_t})$, its CF is given by (cf. \cite{Papoulis})
\begin{align}
    \text{CF}_{Y}(t) = \frac{2(-\jmath \frac{K\sigma_k^2 t}{P_t})^{L/2}}{\Gamma(L)} \mathcal{K}_L\bigg( \sqrt{\frac{-4\jmath  K\sigma_k^2 t}{P_t}} \bigg).
\end{align}
As Lemma \ref{Xi_Distri_Lem} has proved that $X$ and $Y$ are independent, we can express the CF of $Z$ as
\begin{align}
    &\text{CF}_Z(t) = \text{CF}_{Y}(t) \text{ CF}_X(t) \notag\\
   &\hspace{1cm} =  \frac{2(L-1)^{K-1} }{\Gamma(L)} \Big(\frac{K\sigma_k^2 }{P_t} \Big)^{\!\! \frac{L}{2}} \frac{ \jmath^L\exp(\jmath (K-1) t )} {(\jmath t)^{(L-1)(K-1)-L/2} }  \notag\\
   &\hspace{1.5cm}\times \mathcal{K}_L\bigg( \sqrt{\frac{-4\jmath  K\sigma_k^2 t}{P_t}} \bigg)   \Big[ \Upsilon(L-1, \jmath t) \Big]^{K-1}.
\end{align}
Based on the CF of $Z$, the PDF of $Z$ can be derived by \cite{Papoulis}
\begin{align}
    f_Z(z) &=  \frac{1}{2 \pi} \int_{-\infty}^{+\infty} \exp(-\jmath t z)  \text{ CF}_{Z}(t) \dd t. 
\end{align}
According to Gil-Pelaez Theorem \cite{Gil}, we can derive the CDF of $Z$  by
\begin{align}
    F_Z(z) =  \frac{1}{2} - \frac{1}{\pi} \int_0^\infty \text{Im}\left\{ \frac{\exp(-\jmath t z)}{t}  \text{ CF}_Z(t) \right\} \dd t.  
\end{align}

Combining \eqref{CDF_general} and the CDF of $Z$, we can derive the CDF of $\text{SINR}_k$ in 
Theorem \ref{SINR_MF_dist_thm}. Then
the PDF of $\text{SINR}_k$ can be derived by
\begin{align}
    f_{\text{SINR}_k} (\gamma) =\frac{\partial F_{\text{SINR}_k}(\gamma)}{\partial \gamma}= \frac{1}{\gamma^2} f_Z\Big(\frac{1}{\gamma} \Big) 
\end{align}
which yields \eqref{PDF_SINRk_eq} by using the PDF of $Z$.

	\bibliographystyle{IEEEtran}				
	\bibliography{IEEEabrv,aBiblio}			




\end{document}